\newtheorem{remark}{Remark}
\newtheorem{theorem}{Theorem}
\newtheorem{lemma}{Lemma}
\begin{document}

\title{Multi-period Optimal Control for Mobile Agents Considering State Unpredictability}
\author{Chendi Qu, Jianping He, and Jialun Li 
	\thanks{
	 The authors are with the Dept. of Automation, Shanghai Jiao Tong University, and Key Laboratory of System Control and Information Processing, Ministry of Education of China, Shanghai, China. E-mail address: \{qucd21, jphe, jialunli\}@sjtu.edu.cn. This work was supported by the National Natural Science Foundation of China 61973218 and 62103266.
	}
}

\maketitle

\begin{abstract}
The optimal control for mobile agents is an important and challenging issue. Recent work shows that using randomized mechanism in agents' control can make the state unpredictable, and thus improve the security of agents. However, the unpredictable design is only considered in  single period, which can lead to intolerable control performance in long time horizon. This paper aims at the trade-off between the control performance and state unpredictability of mobile agents in long time horizon. 
Utilizing random perturbations consistent with uniform distributions to maximize the attackers' prediction errors of future states, we formulate the problem as a multi-period convex stochastic optimization problem and solve it through dynamic programming. Specifically, we design the optimal control strategy considering both unconstrained and input constrained systems. The analytical iterative expressions of the control  are further provided. 
Simulation illustrates that the algorithm increases the prediction errors under Kalman filter while achieving the control performance requirements successfully.

\end{abstract}

\vspace{-5pt}
\section{Introduction}
Nowadays, as technologies of perception, localization and motion planning gradually mature, mobile agents such as autonomous vehicles have been widely applied in various fields \cite{raunholt2021towards, jiao2021multi}. 
However, traditional researches on optimal control and planning usually pursue the objectives of optimal time, shortest path or minimum energy consumption \cite{ connors2007analysis}. Since autonomous vehicles become more and more pervasive in the industry and military fields, how to protect their security while controlling have raised great concerns these years \cite{pasqualetti2013attack}, but still remains an unsolved problem \cite{bijani2014review}. 

In this paper, we focus on the security breach where an external attacker can predict the states of a mobile agent deployed in the physical environment. This security issue is of practical importance, since the historical trajectory of a vehicle is inevitably public to the attacker, who can leverage numerous estimation methods for interception or attack \cite{qu2022moving}. For instance, \cite{prevost2007extended} uses extended Kalman filter to estimate the states of a moving object and predict its trajectory by a UAV and \cite{schulz2018multiple} presents a multiple model unscented Kalman filter to predict multi-agent trajectory. 
Notice that for an attacker, an accurate prediction of the system state is essential to the latter attack. Therefore, it is necessary for a control system to protect its history trajectory for being unpredictable to attackers. Some studies have already been carried out in these years. \cite{tsiamis2019motion} introduces a novel coding scheme to protect the secrecy of the robot motion planning. \cite{hibbard2019unpredictable} considers an entropy maximization problem on a partially observable Markov decision process to decrease the predictability of the decision-maker's trajectory. A secure control method is shown in \cite{bianchin2019secure} to ensure the resilience to attacks. 

Nevertheless, to the best of our knowledge, few studies have yielded the unpredictable control of the mobile agents. In \cite{li2020unpredictable}, the author proves the unpredictability of the system is greatest when the inputs satisfy a uniform distribution. However, the proposed stochastic control method is a single-period optimization, which can only guarantee an unpredictable trajectory but omits the overall system performance in a long time horizon. Due to the addition of random disturbances, the final state of the system may not reach the expected target state. 

Inspired by the above discussion, we present a novel control strategy for mobile agents with linear systems to ensure both unpredictability and control performance. A stochastic perturbation is added to control inputs at each step in order to make the system states unpredictable and we translate the problem into a multi-period convex optimization problem. Moreover, simple input constraints are also considered in the problem. The main contributions are summarized as follows:
\begin{itemize}
\item Aiming at the trade-off between state unpredictability and the control performance in long time horizon, we formulate a multi-period stochastic optimization problem for linear systems and solve it through stochastic dynamic programming. 
\item We design an optimal control strategy for both unconstrained and input-constrained mobile agents to ensure the security and be unpredictable for external attackers while achieving the system performance requirements successfully. Performance showed by simulation demonstrates the effectiveness of the algorithm. 
\end{itemize}

The remainder of the paper is organized as follows. Section \ref{preliminary} describes the problem of interest. Section \ref{dpsolver} solves the optimal solution of the unconstrained problem with dynamic programming, while Sec. \ref{in-con} studies the problem with simple input constraints. Simulation results are shown in Sec. \ref{sim}, followed by conclusions and future directions in Sec. \ref{conc}. 

\vspace{-5pt}
\section{Problem Formulation and Preliminaries}\label{preliminary}
\subsection{Model Description}
\begin{itemize}[leftmargin=*]
\item System model of mobile agent
\end{itemize}

Consider a mobile agent with a linear dynamic model whose discrete form is
\vspace{-4pt}
\begin{equation}\vspace{-4pt}
{x}_{k+1} = A_k {x}_k + B_k {u}_k,
\end{equation}
where ${x}_k$ is the state vector of the mobile agent, ${u}_k$ is the control input and $A_k, B_k$ are $n \times n$ and $ n \times m$ matrices for any $k = 0, 1,\cdots, N -1$. 
The output model of the mobile agent is
\vspace{-3pt}
\begin{equation}\vspace{-3pt}
{y}_{k} = C {x}_k,
\end{equation}
where ${y}_k$ is the output information such as agent's positions, and we have $C = \begin{bmatrix} I_{q \times q} \quad \text{0}_{q \times (n-q)} \end{bmatrix}$.

\begin{itemize}[leftmargin=*]
\item Prediction model of attacker
\end{itemize}

Suppose there is an attacker having exact knowledge of the dynamic model $A_k, B_k$ and $C$ of the mobile agent. Through observing the output information, the attacker is able to predict the future trajectory of the agent with some data fusion methods. The prediction model is described as:
\vspace{-3pt}
\begin{equation}
\vspace{-3pt}
\hat{{y}}_{k+1|k} = C\hat{{x}}_{k+1|k} = C ( A_k \hat{{x}}_{k|k} + B_k \hat{{u}}_{k|k}),
\end{equation}
where $\hat{{y}}_{k+1|k}, \hat{{u}}_{k|k}$ are predictions to the system output and control input at time $k$ separately. $\hat{{x}}_{k|k}$ is the posterior estimate of ${x}_{k}$. In our assumption, $\hat{{x}}_{k|k}$ is an unbiased estimation. Denote the attacker's prediction error at time $k$ as:
\vspace{-3pt}
\begin{equation}
\vspace{-3pt}
{e}_{k+1|k} = {y}_{k+1} - \hat{{y}}_{k+1|k}.
\end{equation}

\subsection{Problem of Interest}

Consider a task of controlling the mobile agent from an initial state $x_0$ to a target state $x_N^o$. An external attacker is trying to predict the future state during this process and carry out an attack or interception. Our objective is to design a control policy ${u}_k$ to maximize the state unpredictability while achieving the control performance requirements in the long time horizon. 

In order to increase the unpredictability, a random disturbance term ${\delta}_k$ is added to the control input at each step. The random variable ${\delta}_k$ satisfies a distribution $f_k(y)$ with the expectation $\mathbb{E}({\delta}_k) = {0}$ and variance $\mathbb{D}({\delta}_{k,i}) = {\sigma}_{k,i}^2$, where $\delta_{k,i}$ and $\sigma_{k,i}$ are $i$ th component of $\delta_k$ and $\sigma_k$ separately. Note that each component of ${\delta}_k$ is independent. We have
\vspace{-3pt}
\begin{equation}\label{uk}
\vspace{-3pt}
{u}_k = {\mu}_k + {\delta}_k,
\end{equation}
which means ${u}_k$ is also a random variable and for all $k=0,1,\cdots,N-1,\,i=1,\cdots,m$
\begin{equation}
\mathbb{E}({u}_k) = {\mu}_k, \, \mathbb{D}({u}_{k,i}) = {\sigma}_{k,i}^2.
\end{equation}
In this paper, we define $\sigma_k^2=[\sigma_{k,1}^2,\cdots,\sigma_{k,m}^2]^T$ as a vector.

We now express the control objective in mathematical forms.  Giving a fixed terminal time $N$ and target state $x_N^o$, we use a linear quadratic objective function to represent the system performance, denoted as $J_c\{u_{0:N-1}\}$. The optimization problem is written as
\vspace{-3pt}
\begin{align}\label{con_obj}
 \min \,J_c\{u_{0:N-1}\} = &\, \, \mathbb{E} [(x_N - x_N^o)^T H (x_N - x_N^o)] \nonumber \\ & + \sum_{k = 0}^{N-1} \mathbb{E} (x_k^T Q_k x_k + u_k^T R_k u_k),
\vspace{-3pt}
\end{align}
where $u_{0:N-1}=\{u_0, u_1, \cdots, u_{N-1}\}$ is the set of control inputs, $H, Q_k$ are positive semi-definite matrices, and each $R_k$ is a positive definite matrix. Note that $J_c$ reflects both the deviation to the target state and the cost of control during the process.
 Since each control input $u_k$ in (\ref{uk}) is a random variable, the above optimization function is expressed in an expectation form.

On the other hand, we use the attacker's prediction error ${e}_{k+1|k}$ at each step $k$ to represent the performance of the unpredictability, which is denoted as $J_p\{k\}$. Notice that ${e}_{k+1|k}$ is also a random variable, so we define $J_p\{k\}$ with the expectation form:
\vspace{-3pt}
\begin{equation}\label{pre_obj}
\vspace{-3pt}
J_p\{k\}=\mathbb{E}(\Vert {e}_{k+1|k} \Vert ^2 _2), k=0, 1, \cdots N-1. 
\end{equation}
Regard (\ref{pre_obj}) as a measurement of the state unpredictability. According to \cite{li2020unpredictable}, we formulate the state unpredictability objective into a max-min optimization problem:
\vspace{-3pt}
\begin{equation}
\vspace{-3pt}\label{unp_obj}
\max_{f_k(y)} \min_{\hat{{u}}_{k|k}} J_p\{k\},
\end{equation}
where $J_p$ is first minimized with the attacker's optimal estimation $\hat{u}_{k|k}$ to the control input, and then maximized with ${\delta}_k$ generated from the optimal distribution $f_k(y)$.

Therefore, considering both of the control performance and the state unpredictability, our objective is to optimize both \eqref{con_obj} and \eqref{unp_obj}. In the next section, we will introduce one basic lemma and transform this multi-objective problem into a solvable single-objective problem.

\section{Problem Reformulation and the Optimal Solution}\label{dpsolver}

\subsection{Problem Reformulation}\label{problemof}
In this subsection, we will introduce a lemma from \cite{li2020unpredictable} and formulate an optimization problem combining \eqref{con_obj} and \eqref{unp_obj}. 
\begin{lemma}\label{lem1}
Consider the problem \eqref{unp_obj}. 
Suppose the maximum variance of the random perturbation ${\delta}_k$ is ${\sigma_k^2}$. Then, $f_k(y)$ is the optimal distribution in the sense of probability iff
\vspace{-3pt}
\begin{equation}\nonumber
\vspace{-3pt}
\mathbb{D}({\delta}_k) = {\sigma_k^2},
\end{equation}
and
\vspace{-5pt}
\begin{equation}\nonumber
\vspace{-5pt}
f_k^*(y) = \left\{
\begin{array}{ll}
\frac{1}{(2\sqrt{3})^m \prod_{i = 1}^m {\sigma}_{k,i}^2}, y_i \in [-\sqrt{3} {\sigma}_{k,i}^2, \sqrt{3} {\sigma}_{k,i}^2]. \\
\\
0, \mathrm{otherwise},
\end{array}
\right.
\end{equation}
where ${\sigma}_{k,i}$ is the component of ${\sigma}_{k}$, $i = 1,\cdots,m$.
\end{lemma}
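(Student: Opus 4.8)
## Proof Proposal

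The plan is to split the claim into two parts that together characterize the optimal distribution: (i) maximizing the inner-minimized objective $\min_{\hat u_{k|k}} J_p\{k\}$ over all admissible $f_k$ forces the variance to be taken at its maximal value, i.e. $\mathbb{D}(\delta_k) = \sigma_k^2$; and (ii) among all distributions achieving that variance, the one that is "optimal in the sense of probability" is the uniform distribution on the symmetric box whose half-widths are $\sqrt{3}\,\sigma_{k,i}^2$. I would handle these in sequence.

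First I would expand the inner problem. Since $\hat x_{k|k}$ is unbiased and the attacker's best linear predictor of $u_k = \mu_k + \delta_k$ is $\hat u_{k|k} = \mu_k$ (the mean, since $\delta_k$ is zero-mean and independent of the observation history), a short computation gives $e_{k+1|k} = C A_k (x_k - \hat x_{k|k}) + C B_k \delta_k$, and hence, using independence and $\mathbb{E}(\delta_k)=0$,
\begin{equation}\nonumber
\min_{\hat u_{k|k}} J_p\{k\} = \mathbb{E}\big(\Vert C A_k (x_k - \hat x_{k|k})\Vert_2^2\big) + \sum_{i=1}^m w_{k,i}\,\mathbb{D}(\delta_{k,i}),
\end{equation}
where $w_{k,i} \ge 0$ are the diagonal-weighted norms of the columns of $CB_k$ (more precisely $w_{k,i} = \Vert CB_k e_i\Vert_2^2$, using independence of the components of $\delta_k$). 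The first term does not depend on $f_k$; the second is monotone nondecreasing in each $\mathbb{D}(\delta_{k,i})$. Therefore the outer maximization over $f_k$ is maximized by saturating every component variance at its upper bound, which yields $\mathbb{D}(\delta_k) = \sigma_k^2$. This establishes the first displayed equation. I would remark that this part is essentially a restatement of the single-period result from \cite{li2020unpredictable} and is the easy half.

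Second, for the "sense of probability" claim I would invoke the characterization already used in \cite{li2020unpredictable}: among all scalar distributions supported in a way consistent with a prescribed variance, "optimal in the sense of probability" is to be read as maximizing entropy (or, equivalently in that paper's framework, being the least informative / hardest to distinguish), and the maximum-entropy distribution subject only to a fixed variance and bounded support is the uniform distribution on the symmetric interval. Componentwise, a zero-mean uniform law on $[-a_i, a_i]$ has variance $a_i^2/3$; setting this equal to $\sigma_{k,i}^2$ (note the paper's convention writes the half-width as $\sqrt{3}\,\sigma_{k,i}^2$, treating $\sigma_{k,i}^2$ itself as the scale parameter) and using independence of the components to take the product density gives exactly the stated $f_k^*$. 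The normalization constant $1/\big((2\sqrt3)^m \prod_i \sigma_{k,i}^2\big)$ then follows by integrating the product of $m$ indicator functions over the box.

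The main obstacle I anticipate is pinning down precisely what "optimal in the sense of probability" means and justifying the maximum-entropy/uniform conclusion rigorously rather than by appeal to the prior paper — in particular, ensuring that the correct constrained optimization (fixed variance, plus whatever boundedness or symmetry is implicitly assumed on $\delta_k$) genuinely singles out the uniform law and not, say, a two-point distribution, which would have larger "spread" in some other sense. I would address this by stating the entropy-maximization formulation explicitly as the working definition, then using a standard Lagrange/Gibbs argument (the maximizer of differential entropy under a moment constraint on a bounded interval is constant on that interval) to conclude; the iff direction in the other sense follows because any distribution strictly different from $f_k^*$ on a positive-measure set strictly decreases the entropy. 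Everything else — the prediction-error expansion, the monotonicity argument, and the normalization constant — is routine.
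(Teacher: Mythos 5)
The paper does not actually prove Lemma~\ref{lem1}: it states the result and defers entirely to the cited reference (``The detailed proof process is in \cite{li2020unpredictable}''), so there is no in-paper argument to compare yours against. Judged on its own, the first half of your proposal is sound and is the standard computation: with $\hat x_{k|k}$ unbiased and $\delta_k$ zero-mean and independent of the estimation error, the cross terms vanish, the attacker's optimal $\hat u_{k|k}$ is $\mu_k$, and $\min_{\hat u_{k|k}} J_p\{k\}$ reduces to a term independent of $f_k$ plus $\sum_i \Vert CB_k e_i\Vert_2^2\,\mathbb{D}(\delta_{k,i})$, which is nondecreasing in each component variance; saturating the variance bound follows immediately. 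You are also right to flag the paper's notational inconsistency (a uniform law on $[-\sqrt{3}\sigma_{k,i}^2,\sqrt{3}\sigma_{k,i}^2]$ has variance $\sigma_{k,i}^4$, not $\sigma_{k,i}^2$), which is inherited from the lemma statement itself.

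The gap is in the second half. Your working definition of ``optimal in the sense of probability'' as entropy maximization, combined with the claim that ``the maximum-entropy distribution subject only to a fixed variance and bounded support is the uniform distribution,'' does not hold as stated: under a variance constraint alone the entropy maximizer is Gaussian; under a fixed support $[-a,a]$ alone it is uniform; under both constraints simultaneously the maximizer has the Gibbs form $\exp(\lambda_0+\lambda_2 y^2)\mathds{1}_{[-a,a]}(y)$ and is uniform only when the support happens to be exactly the interval whose uniform law already attains the prescribed variance --- which is precisely the conclusion you are trying to derive, so the argument is circular. Note also that the variance-saturation argument of your first half cannot be what fixes the support either, since among distributions supported on a given bounded interval the variance is maximized by the two-point law at the endpoints, not the uniform. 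To close the gap you would need the actual optimality criterion used in \cite{li2020unpredictable} (which is not a pure moment or pure support constraint), or you must state an explicit admissible class in which the uniform law is genuinely extremal and prove extremality there; as written, the ``only if'' direction in particular is unsupported.
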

\noindent The detailed proof process is in \cite{li2020unpredictable}.

Lemma \ref{lem1} shows that when the variance of the perturbation $\sigma_k^2$ reaches the maximum value, the expectation of the attacker's prediction error $J_p\{k\}$ is the largest. This conclusion is consistent with our intuitions, which means the larger the control variances are, the more difficult it is for an attacker to predict the future states accurately. Therefore, to maximize the state unpredictability is to maximize the variance $\sigma_k^2$, and using the multivariate uniform distribution as $f_k^*(y)$ can optimize the problem \eqref{unp_obj}. Thus, we define an unpredictable utility function as follows to help formulate our optimization problem. 

Unpredictable Utility Function:  Based on the above discussion, we design a utility function, denoted by $J_u\{k\}$, to represent the unpredictability at each period $k$, and combine \eqref{con_obj} and \eqref{unp_obj} into a single-objective optimization problem. There are two requirements for designing this utility function: 
\begin{itemize}
\item The utility function is negatively correlated with all the variances $\sigma_{k,i}^2$, i.e., $J_u\{k\}$ is a monotonically decreasing linear function of $\sigma_{k,i}^2$. When we minimize the value of the function $J_u\{k\}$, it needs to enlarge the variance, and then ensure the higher unpredictability. Thus, the trade-off between the system performance and the unpredictability is well formulated. 
\item To maintain the convexity of the optimization objective, each $J_u\{k\}$ needs to be convex with $\sigma_{k,i}^2$ for $k=0, \cdots, N-1$. Clearly, if the function is non-convex, the global optimal solution is hard to be obtained and not unique. Meanwhile, it may be taken at the infinite boundary when there are no constraints on the variable. 
\end{itemize}


Therefore, considering these two requirements, without loss of generality, the unpredictable utility function is defined as 
\vspace{-3pt}
\[\vspace{-3pt}
J_u\{k\}=\sum_{i = 1}^{m}\frac{1}{{\sigma_{k,i}}^2}.
\]
Then, we formulate a single-objective optimization problem: 
\begin{subequations}\label{J1}
\begin{eqnarray}
\mathbf{P}_1: & & \min_{\mu_k, \sigma_k} J = \lambda_1 \mathbb{E} [(x_N - x_N^o)^T H (x_N - x_N^o)]  \\\nonumber & & +\lambda_2 \sum_{k = 0}^{N-1} \mathbb{E} (x_k^T Q_k x_k + u_k^T R_k u_k) + \sum_{k = 0}^{N-1} \sum_{i = 1}^{m} \frac{\lambda_{3,k}}{{\sigma_{k,i}}^2} \\ &&
\mathrm{s.t.} ~~
(1), (5), (6),\,k = 0,1,\cdots,N-1,\nonumber
\end{eqnarray}
\end{subequations}
\noindent where $\lambda_1>0, \lambda_2>0, \lambda_{3,k}\geqslant 0$ are weights of each term respectively and $\sigma_{k,i}$ is the $i$th component of $\sigma_k$. The optimization variables are $\mu_k$ and $\sigma_k$ at each step. We can see that the third term in $J$ is $\sum_{k = 0}^{N-1} \lambda_{3,k} J_u\{k\}$, representing the unpredictability utility, and when the variance of control is larger, the function value is smaller. Moreover, the Bellman functions of this problem for all $k$ are convex, which will be shown in the next subsection.

In this way, we obtain a multi-period and single-object convex optimization problem. We will solve the Problem $\mathbf{P}_1$ in the next subsection.

\subsection{The Optimal Control Policy}
Since Problem $\mathbf{P}_1$ is multi-period and convex, we solve the optimal solution with dynamic programming \cite{bellman1966dynamic}.
\begin{theorem}\label{the-1}
The optimal solution of Problem $\mathbf{P}_1$ is given by
\vspace{-3pt}
\begin{equation}\label{sol_p1}\vspace{-3pt}
    \left\{
        \begin{array}{ll}
            \mu_{k} = -G_{k} x_{k} + M_{k} \\
            {\sigma_{k,i}}^2 = (\frac{\lambda_{3,k}}{P_{k,ii}})^{\frac{1}{2}},
        \end{array}
    \right.
\end{equation}
where for $k=0,1,\cdots,N-1$
\vspace{-3pt}
\begin{equation}\label{PGM}\vspace{-3pt}
\left\{
\begin{array}{ll}
 G_k = P_k^{-1}B_k^T J_{1,k+1}^T A_k \\
 M_k = \frac{1}{2} P_k^{-1}B_k^T J_{2,k+1}^T \\
 P_k = \lambda_2 R_k + B_k^T J_{1,k+1} B_k
\end{array}
\right.
\end{equation}
and
\vspace{-3pt}
\begin{equation}\label{lambda}
\vspace{-3pt}
\left\{
\begin{array}{ll}
 W_k = \lambda_2 Q_k + A_k^T J_{1,k+1} A_k ,\,Z_k = J_{2,k+1} A_k \\
  J_{1,k} = W_k - A_k^T J_{1, k+1} B_k G_k,\,J_{1,N} = \lambda_1 H \\
 J_{2,k} = Z_k - J_{2,k+1} B_k G_k,\,J_{2,N} = 2\lambda_1 {x_N^o}^T H .
\end{array}
\right.
\end{equation}
\end{theorem}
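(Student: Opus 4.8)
The plan is to solve Problem $\mathbf{P}_1$ by backward induction (stochastic dynamic programming), positing that the cost-to-go at each stage $k$ is a quadratic-plus-inverse-variance function of the state, and then verifying that this ansatz is preserved by the Bellman recursion. Concretely, I would define the value function $V_k(x_k) = \min \mathbb{E}\big[\text{(remaining cost)}\,|\,x_k\big]$ and conjecture that it has the form $V_k(x_k) = x_k^T J_{1,k} x_k + J_{2,k} x_k + c_k$ for some symmetric matrix $J_{1,k}$, row vector $J_{2,k}$, and constant $c_k$ that also collects the accumulated $\sum \lambda_{3,j}/\sigma_{j,i}^2$ contributions. The terminal condition $V_N(x_N) = \lambda_1 (x_N - x_N^o)^T H (x_N - x_N^o)$ expands to give exactly $J_{1,N} = \lambda_1 H$ and $J_{2,N} = -2\lambda_1 {x_N^o}^T H$ (note: the excerpt writes $2\lambda_1{x_N^o}^T H$; I would track the sign carefully, as the dynamics $x_{k+1}=A_kx_k+B_ku_k$ with $\mu_k=-G_kx_k+M_k$ feed back through $J_{2,k+1}$).

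The core inductive step is: assuming $V_{k+1}$ has the stated form, compute
\[
V_k(x_k) = \min_{\mu_k,\sigma_k}\ \mathbb{E}\Big[\lambda_2\big(x_k^TQ_kx_k + u_k^TR_ku_k\big) + \sum_{i=1}^m \frac{\lambda_{3,k}}{\sigma_{k,i}^2} + V_{k+1}(A_kx_k+B_ku_k)\Big],
\]
where $u_k=\mu_k+\delta_k$, $\mathbb{E}(\delta_k)=0$, $\mathbb{D}(\delta_{k,i})=\sigma_{k,i}^2$. The key simplification comes from the (omitted) Lemma stating $\mathbb{E}(u^TPu) = \mu^TP\mu + \mathrm{Tr}(\Sigma P)$ with $\Sigma = \mathrm{diag}(\sigma_{k,1}^2,\dots,\sigma_{k,m}^2)$; applying it to both the $u_k^TR_ku_k$ term and the quadratic-in-$u_k$ part of $V_{k+1}$, the expectation decomposes into a deterministic quadratic form in $\mu_k$ plus a term linear in each $\sigma_{k,i}^2$ plus the $\sum \lambda_{3,k}/\sigma_{k,i}^2$ term. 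Crucially, $\mu_k$ and $\sigma_k$ decouple in the objective. Minimizing over $\mu_k$ is an unconstrained convex quadratic minimization: setting the gradient to zero yields $P_k\mu_k = B_k^TJ_{1,k+1}^TA_kx_k - \tfrac12 B_k^TJ_{2,k+1}^T$ with $P_k = \lambda_2 R_k + B_k^TJ_{1,k+1}B_k \succ 0$, which reproduces $\mu_k = -G_kx_k + M_k$ with $G_k, M_k$ as in \eqref{PGM}. Minimizing over each $\sigma_{k,i}^2$ independently: the relevant terms are $P_{k,ii}\,\sigma_{k,i}^2 + \lambda_{3,k}/\sigma_{k,i}^2$ (here $P_{k,ii}$ is the coefficient picked up from $\mathrm{Tr}(\Sigma(\lambda_2 R_k + B_k^TJ_{1,k+1}B_k))$, i.e. the $i$th diagonal entry of $P_k$), whose unconstrained minimizer over $\sigma_{k,i}^2 > 0$ is $\sigma_{k,i}^2 = (\lambda_{3,k}/P_{k,ii})^{1/2}$ by AM–GM or first-order conditions — giving the second line of \eqref{sol_p1}.

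After substituting the optimal $\mu_k$ back, I would collect terms in $x_k$ to read off the recursions: the quadratic-in-$x_k$ coefficient becomes $\lambda_2 Q_k + A_k^TJ_{1,k+1}A_k - A_k^TJ_{1,k+1}B_kG_k = W_k - A_k^TJ_{1,k+1}B_kG_k = J_{1,k}$, and the linear coefficient becomes $J_{2,k+1}A_k - J_{2,k+1}B_kG_k = Z_k - J_{2,k+1}B_kG_k = J_{2,k}$, matching \eqref{lambda}; the leftover constant (including the optimized $\sigma$-terms and $M_k$-contributions) folds into $c_k$ and does not affect the optimal policy. Finally I would note symmetry of $J_{1,k}$ is preserved (it should be checked that $A_k^TJ_{1,k+1}B_kG_k$ is symmetric, which follows since $G_k = P_k^{-1}B_k^TJ_{1,k+1}A_k$ and $P_k, J_{1,k+1}$ are symmetric).

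The main obstacle I anticipate is bookkeeping rather than conceptual: keeping the linear term $J_{2,k}$ and all its transposes/signs consistent through the substitution of $\mu_k = -G_kx_k+M_k$, since $M_k$ itself depends on $J_{2,k+1}$ and feeds a cross-term $2\mu_k^TB_k^TJ_{1,k+1}(\cdot) $ as well as the linear part of $V_{k+1}$. A secondary subtlety is justifying that the per-coordinate minimization over $\sigma_{k,i}^2$ is legitimate — this rests on the earlier remark that the Bellman functions are convex in $\sigma_{k,i}^2$ and on the $\mu$–$\sigma$ decoupling, so that the interior stationary point is the global minimum and no boundary/degeneracy issues arise (which is exactly why $\lambda_{3,k}/\sigma_{k,i}^2$ was chosen as the utility). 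Once the ansatz is shown to propagate with these coefficients, the theorem follows by induction from $k=N$ down to $k=0$.
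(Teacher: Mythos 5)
Your proposal is correct and follows essentially the same route as the paper's own proof: backward dynamic programming with a quadratic-plus-linear-plus-constant value-function ansatz ($V_k = x_k^T J_{1,k}x_k - J_{2,k}x_k + J_{3,k}$ in the paper's sign convention), the identity $\mathbb{E}(u^TPu)=\mu^TP\mu+\mathrm{Tr}(\Sigma P)$ to decouple $\mu_k$ from $\sigma_k$, first-order conditions giving $\mu_k=-G_kx_k+M_k$ and $\sigma_{k,i}^2=(\lambda_{3,k}/P_{k,ii})^{1/2}$, and collection of terms to read off the $J_{1,k},J_{2,k}$ recursions. The sign caveat you raise about $J_{2,N}$ is only the difference between writing the linear term as $+J_{2,k}x_k$ versus the paper's $-J_{2,k}x_k$, so no discrepancy remains.
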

\begin{proof}
See the proof in Appendix.
\end{proof}{}

Once the parameters $G_k, M_k, P_k$ are obtained offline, we are able to calculate a sequence of $\sigma_k^2$ with Theorem \ref{the-1}. According to Lemma \ref{lem1}, the optimal distribution of the perturbation $\delta_k$ is the multivariate uniform distribution. Therefore, we generate the $\delta_k$ at each step from a uniform distribution $f_k(y)$ whose expectation is $0$ and variance is $\sigma_k^2$. We have 
\vspace{-3pt}
\begin{equation}\label{dist}
\vspace{-3pt}
\delta_{k,i} \sim \mathcal{U}[-(3 \sigma_{k,i}^2)^{\frac{1}{2}}, (3 \sigma_{k,i}^2)^{\frac{1}{2}}].
\end{equation}
Then, together with formula (1) and (5), we can get the expectation sequence $\mu_k$ and control inputs $u_k$.

The time complexity of the dynamic programming algorithm is $O(Nn^3)$, where $n$ is the dimension of state $x_k$ and $N$ is total control steps.

Note that the expectation sequence $\mu_k$ decides the direction of convergence of the system, while the variance $\sigma_k^2$ determines the unpredictability of the system state. In a traditional LQR problem without uncertainty, the control law is usually given by $u_k = -K_k x_k$. As for our algorithm, an additional parameter $M_k$ is used to adjust $\mu_k$ (and $u_k$) since the disturbance term $\delta_k$ is added. We can observe from the formula (16) that $\sigma_k^2$ is positively correlated with $\lambda_{3,k}$, which means the unpredictability of the system can be increased by enlarging the value of $\lambda_{3,k}$. We will detail the effect of weight parameters on control performance in the simulation section.


\section{Dealing with Simple Input-Constraints}\label{in-con}
In most practical situations, the systems contain multiple constraints. However, it is not easy to solve a multi-period linear quadratic optimal control problem with complex constraints. Inspired by \cite{mare2007solution}, we give an algorithm to solve the problem $\mathbf{P}_1$ with simple input-constraints through dynamic programming.

Consider a constrained problem
\vspace{-3pt}
\begin{subequations}\label{J2}
\begin{eqnarray}
\mathbf{P}_2: & & \min_{\mu_k, \sigma_k} J = \lambda_1 \mathbb{E} [(x_N - x_N^o)^T H (x_N - x_N^o)]  \\\nonumber & & +\lambda_2 \sum_{k = 0}^{N-1} \mathbb{E} (x_k^T Q_k x_k + u_k^T R_k u_k) + \sum_{k = 0}^{N-1} \sum_{i = 1}^{m} \frac{\lambda_{3,k}}{{\sigma_{k,i}}^2} \\ &&
\mathrm{s.t.} ~ -\overline{u} \leqslant u_k \leqslant \overline{u}, \\
&& ~~~~~~(1), (5), (6),\,k = 0,1,\cdots,N-1,\nonumber
\end{eqnarray}
\end{subequations}
\noindent where $\overline{u} > 0$ is the upper bound of $|u_k|$ at each step. From Theorem \ref{the-1}, we obtain 
\vspace{-3pt}
\begin{equation}
\vspace{-3pt}
\begin{array}{ll}
    \delta_k \sim \mathcal{U}[-\overline{\delta}_k,\overline{\delta}_k], \; \overline{\delta}_{k,i} = (3 {\sigma_{k,i}}^2)^{\frac{1}{2}}.
\end{array}
\end{equation}
Then we can set a conservative bound for each $\mu_k$ as
\vspace{-3pt}
\begin{equation}\label{bound}
\vspace{-3pt}
-\overline{\mu}_k \leqslant \mu_k \leqslant \overline{\mu}_k,\;\overline{\mu}_k = \overline{u} - \tau \overline{\delta}_k,
\end{equation}
where $\tau$ can determine how conservative the control is.
 In this way, if we set $\tau=1$, the control inputs $u_k$ will definitely satisfy the constraints (\ref{J2}b).

The following theorems provide a solution for $\mathbf{P}_2$.
\begin{theorem}\label{the-2}
The solution of $\mathbf{P}_2$ is given by:
\vspace{-3pt}
\begin{equation}\label{sol_p2}\vspace{-3pt}
    \left\{
        \begin{array}{ll}
            \mu_{k} = -\Tilde{G_{k}} x_{k} + \Tilde{M_{k}} \\
            {\sigma_{k,i}}^2 = (\frac{\lambda_{3,k}}{P_{k,ii}})^{\frac{1}{2}},
        \end{array}
    \right.
\end{equation}
where for $k=0,1,\cdots,N-1$, $\Tilde{G_{k}}$ and $\Tilde{M_{k}}$ are defined as:
\begin{equation}\label{GM}
\begin{aligned}
&\Tilde{G_{k}} = 
    \left\{
        \begin{array}{ll}
            G_k \qquad & if |-{G_{k}} x_{k} + {M_{k}}| \leqslant \overline{\mu}_k, \\
            0 \qquad & otherwise,
        \end{array}
    \right.  \\
&\Tilde{M_{k}} = 
    \left\{
        \begin{array}{ll}
            M_k \qquad & if |-{G_{k}} x_{k} + {M_{k}}| \leqslant \overline{\mu}_k, \\
            -\overline{\mu}_k \qquad & if -{G_{k}} x_{k} + {M_{k}} < -\overline{\mu}_k,  \\
            \overline{\mu}_k \qquad & if -{G_{k}} x_{k} + {M_{k}} > \overline{\mu}_k,
        \end{array}
    \right.
\end{aligned}
\end{equation}
and the calculation of $G_k,M_k,P_k$ is the same as Theorem \ref{the-1}.
\end{theorem}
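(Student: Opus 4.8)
The plan is to run the backward dynamic-programming recursion exactly as in the proof of Theorem~\ref{the-1}, but at stage $k$ to minimize the stage-to-go over the box $|\mu_k|\le\overline{\mu}_k$ of \eqref{bound} rather than over all of $\mathbb{R}^m$. Because $\overline{\mu}_k=\overline{u}-\tau\overline{\delta}_k$ was chosen so that $|\mu_k|\le\overline{\mu}_k$ and $|\delta_k|\le\overline{\delta}_k$ force $|u_k|\le\overline{u}$ when $\tau=1$, every pair $(\mu_k,\sigma_k)$ produced by the recursion is automatically feasible for $\mathbf{P}_2$, so what remains is to verify optimality against the quadratic cost-to-go. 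I would carry the induction hypothesis that the cost-to-go at stage $k+1$ has the form $V_{k+1}(x)=x^T J_{1,k+1}x-J_{2,k+1}x+c_{k+1}$ with $J_{1,\cdot},J_{2,\cdot}$ given by \eqref{lambda}, the base case being $V_N(x)=\lambda_1(x-x_N^o)^T H(x-x_N^o)$, which matches $J_{1,N}=\lambda_1 H$ and $J_{2,N}=2\lambda_1 {x_N^o}^T H$.

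First I would substitute $u_k=\mu_k+\delta_k$ and $x_{k+1}=A_kx_k+B_k\mu_k+B_k\delta_k$ into the $k$-th stage cost plus $\mathbb{E}[V_{k+1}(x_{k+1})]$, and use $\mathbb{E}(\delta_k)=0$ together with $\mathbb{E}(\delta_k^T P\delta_k)=\mathrm{Tr}(\Sigma_k P)$, where $\Sigma_k=\mathrm{diag}(\sigma_{k,1}^2,\cdots,\sigma_{k,m}^2)$, to split the objective into a $\mu_k$-part — the convex quadratic $\mu_k^T P_k\mu_k+2(x_k^T A_k^T J_{1,k+1}B_k-\tfrac12 J_{2,k+1}B_k)\mu_k$ that already appears in Theorem~\ref{the-1}, with $P_k=\lambda_2 R_k+B_k^T J_{1,k+1}B_k$ — plus a variance-part $\sum_i(P_{k,ii}\sigma_{k,i}^2+\lambda_{3,k}/\sigma_{k,i}^2)$. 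The variance-part is separable in the $\sigma_{k,i}^2$ and is not affected by the input bound, so setting $P_{k,ii}-\lambda_{3,k}/\sigma_{k,i}^4=0$ gives $\sigma_{k,i}^2=(\lambda_{3,k}/P_{k,ii})^{1/2}$, exactly as in Theorem~\ref{the-1}; this in turn fixes $\overline{\delta}_{k,i}=(3\sigma_{k,i}^2)^{1/2}$ and hence $\overline{\mu}_k$.

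Next I would minimize the convex quadratic in $\mu_k$ over $[-\overline{\mu}_k,\overline{\mu}_k]$. Its unconstrained minimizer is $-G_kx_k+M_k$ by the same computation as in Theorem~\ref{the-1}; by convexity, the minimizer over the box is the saturation of this point, i.e., it remains $-G_kx_k+M_k$ when the latter already lies inside the box, and is pushed onto the active face at $\pm\overline{\mu}_k$ otherwise, which is precisely the case split defining $\Tilde{G_k}$ and $\Tilde{M_k}$ in \eqref{GM}, giving $\mu_k=-\Tilde{G_k}x_k+\Tilde{M_k}$. Substituting the optimal $(\mu_k,\sigma_k)$ back into the stage objective then reproduces a quadratic-plus-affine $V_k$ with the coefficient updates \eqref{PGM} and \eqref{lambda}, by the same algebra as in Theorem~\ref{the-1}, which closes the induction and yields \eqref{sol_p2}.

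The delicate point is the box minimization in this last step: componentwise saturation returns the \emph{exact} constrained minimizer of a convex quadratic over a box only when the Hessian $P_k$ decouples along coordinates — e.g., scalar input, or diagonal $R_k$ and $B_k^T J_{1,k+1}B_k$ — so in the general case one either argues coordinate by coordinate or, following \cite{mare2007solution}, reads \eqref{GM} as the conservative feasible policy produced by the dynamic-programming scheme. A related subtlety is that saturating $\mu_k$ makes the true constrained cost-to-go piecewise quadratic rather than quadratic, so the final induction step should be understood as propagating the same quadratic surrogate that defines $G_k,M_k,P_k$, or, alternatively, tracking the piecewise-quadratic $V_k$ as in \cite{mare2007solution}; I would make this modeling choice explicit at the start of the proof.
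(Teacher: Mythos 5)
Your argument is essentially the paper's own: the paper gives no separate proof of Theorem~\ref{the-2}, only Remark~\ref{rmk-1}, which says to rerun the dynamic programming of Theorem~\ref{the-1} and replace the unconstrained minimizer $-G_kx_k+M_k$ by its saturation onto the box $[-\overline{\mu}_k,\overline{\mu}_k]$, substituting $\Tilde{G_k},\Tilde{M_k}$ for $G_k,M_k$ in the backward recursion. Your derivation of the variance part (unaffected by the input bound) and of the clipped $\mu_k$ is exactly that scheme, so on the level of approach you and the paper coincide.

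What you add, and the paper does not, is an honest statement of the two places where this scheme falls short of a proof of optimality. Both of your caveats are genuine: (i) componentwise clipping of the unconstrained minimizer is the exact minimizer of $\mu_k^TP_k\mu_k + K_k\mu_k$ over a box only when $P_k$ is diagonal (e.g., scalar input, as in the paper's simulation); for general $P_k$ the true box-constrained minimizer need not lie at the clipped point, so \eqref{GM} is then only a feasible suboptimal policy. (ii) Once saturation occurs, the exact cost-to-go $V_k$ becomes piecewise quadratic, so continuing to propagate the single quadratic $(J_{1,k},J_{2,k})$ of \eqref{lambda} — which is what Remark~\ref{rmk-1} prescribes — yields a surrogate value function rather than the Bellman value function; the cited work \cite{mare2007solution} handles this by tracking the piecewise-quadratic structure explicitly, which the paper does not do. So your proposal is a faithful reconstruction of the intended argument, and your closing paragraph correctly identifies that, as stated, Theorem~\ref{the-2} is only exact in the decoupled (e.g., scalar-input) case and should otherwise be read as a conservative feasible policy rather than the optimal solution of $\mathbf{P}_2$.
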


\begin{remark}\label{rmk-1}
The process of using dynamic programming to solve $\mathbf{P}_2$ is similar to $\mathbf{P}_1$. The difference is that due to the constraints on $\mu_k$, at each step the new parameter $\Tilde{G_{k}}$ and $ \Tilde{M_{k}}$ need to satisfy the equation (\ref{GM}) according to three different situations, and $G_k, M_k$ are substituted into $\Tilde{G_{k}}, \Tilde{M_{k}}$ during the continue calculation. 
\end{remark}

\begin{remark}\label{rmk-2}
The problem of Theorem \ref{the-2} is that we cannot get $x_k$ when computing the parameters $\Tilde{G_{k}}$ and $\Tilde{M_{k}}$ off-line. Therefore, we need to traverse all the possibilities of the parameter pair $(\Tilde{G_{k}}, \Tilde{M_{k}})$. Note that there are 3 possible values for $(\Tilde{G_{k}}, \Tilde{M_{k}})$ at each step and $3^N$ possible values for $(\Tilde{G_{0}}, \Tilde{M_{0}})$. But for a certain initial state $x_0$, only one pair of $(\Tilde{G_{0}}, \Tilde{M_{0}})$ satisfies all constraints in the problem. We propose Theorem \ref{the-4} to help find the feasible parameters.
\end{remark}

\begin{theorem}\label{the-4}
Denote $\Tilde{G}_{0:N-1}=[\Tilde{G}_0,\cdots,\Tilde{G}_{N-1}]$, $ \Tilde{M}_{0:N-1}=[\Tilde{M}_0,\cdots,\Tilde{M}_{N-1}]$. For an initial state $x_0$, whether the parameter pair $(\Tilde{G}_{0:N-1}, \Tilde{M}_{0:N-1})$ are feasible with the constraints is determined by following $N$ inequalities.
\vspace{-3pt}
\begin{equation}\label{feasible}
\vspace{-3pt}
-\overline{\mu}_{k} \leqslant E_k x_0 + F_k \leqslant \overline{\mu}_k,\, k = 0,1,\cdots,N-1
\end{equation}
where
\vspace{-3pt}
\begin{equation}\nonumber
\vspace{-3pt}
\left\{
    \begin{array}{ll}
        E_k = - \Tilde{G_k} \Pi_{i = 0}^{k-1} (A_k-B_k\Tilde{G_i})  \\
        F_k = -\Tilde{G_k} \sum_{i = 0}^{k-1} [\Pi_{j = i}^{k-1} (A_j-B_j \Tilde{G_j})]B_i \Tilde{M_{i}} + \Tilde{M_k}.
    \end{array}
\right.
\end{equation}
\end{theorem}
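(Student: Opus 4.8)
The plan is to reduce the feasibility question to a linear constraint on the disturbance-free mean trajectory, and then to obtain that trajectory in closed form by unrolling the closed-loop recursion. Recall from the discussion preceding Theorem~\ref{the-2} that, with the conservative bound $\overline{\mu}_k = \overline{u} - \tau\overline{\delta}_k$ and $\tau = 1$ in (\ref{bound}), the realized input $u_k = \mu_k + \delta_k$ lies in $[-\overline{u},\overline{u}]$ for every admissible $\delta_k$ exactly when $-\overline{\mu}_k \leqslant \mu_k \leqslant \overline{\mu}_k$; moreover the case rule (\ref{GM}) is precisely the projection of $-G_k x_k + M_k$ onto $[-\overline{\mu}_k,\overline{\mu}_k]$. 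Hence a structural candidate $(\Tilde{G}_{0:N-1},\Tilde{M}_{0:N-1})$ (one of the $3^N$ branches of Remark~\ref{rmk-2}) is feasible for $\mathbf{P}_2$ at the initial state $x_0$ if and only if the mean sequence $\mu_k$ it generates satisfies $-\overline{\mu}_k \leqslant \mu_k \leqslant \overline{\mu}_k$ for all $k = 0,\dots,N-1$. So it suffices to express each $\mu_k$ as an explicit affine function of $x_0$.

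To do this, set $\delta_k = 0$ and substitute $\mu_k = -\Tilde{G}_k x_k + \Tilde{M}_k$ into the dynamics (1), obtaining the closed-loop recursion $x_{k+1} = (A_k - B_k\Tilde{G}_k)x_k + B_k\Tilde{M}_k$. A short induction on $k$ gives the affine form $x_k = \Phi_k x_0 + \psi_k$, where $\Phi_k$ is the time-ordered product of the closed-loop matrices $A_j - B_j\Tilde{G}_j$ over $j = 0,\dots,k-1$ and $\psi_k = \sum_{i=0}^{k-1}\big(\prod_{j=i+1}^{k-1}(A_j - B_j\Tilde{G}_j)\big)B_i\Tilde{M}_i$, with an empty product read as the identity. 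Substituting back into $\mu_k = -\Tilde{G}_k x_k + \Tilde{M}_k$ identifies $\mu_k = E_k x_0 + F_k$ with $E_k = -\Tilde{G}_k\Phi_k$ and $F_k = -\Tilde{G}_k\psi_k + \Tilde{M}_k$, which are the coefficients in the statement. The asserted $N$ inequalities (\ref{feasible}) are then just the componentwise rewriting of $-\overline{\mu}_k \leqslant \mu_k \leqslant \overline{\mu}_k$, so both implications of the claimed equivalence follow immediately.

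The step I would handle most carefully is the index bookkeeping in the induction: the direction of multiplication (later-time closed-loop matrices multiply on the left), the lower limit of the product inside $\psi_k$ (the off-by-one between starting at $j=i$ and $j=i+1$), and the empty-product convention for the last summand $i = k-1$. Rather than manipulating the nested products directly, I would carry the hypothesis in recursive form, $\Phi_{k+1} = (A_k - B_k\Tilde{G}_k)\Phi_k$ with $\Phi_0 = I$, and $\psi_{k+1} = (A_k - B_k\Tilde{G}_k)\psi_k + B_k\Tilde{M}_k$ with $\psi_0 = 0$, so that the closed forms for $E_k$ and $F_k$ telescope out. A secondary remark worth making is that $\Tilde{G}_k,\Tilde{M}_k$ in (\ref{GM}) are themselves defined through $x_k$, which depends on $\Tilde{G}_{0:k-1},\Tilde{M}_{0:k-1}$; this circularity is harmless here, since the theorem treats the whole pair as given and only asks which constraint it must obey, but it is exactly why Remark~\ref{rmk-2} invokes this result to single out, among the $3^N$ branches, the one consistent with a prescribed $x_0$.
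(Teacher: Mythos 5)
Your proof takes exactly the same route as the paper's own argument: set $\delta_k = 0$, unroll the closed-loop recursion $x_{k+1} = (A_k - B_k\Tilde{G}_k)x_k + B_k\Tilde{M}_k$ to express $\mu_k$ as an affine function of $x_0$, and then impose $-\overline{\mu}_k \leqslant \mu_k \leqslant \overline{\mu}_k$ componentwise. Your careful index bookkeeping in fact produces the product lower limit $j = i+1$ inside $F_k$ (and $A_i - B_i\Tilde{G}_i$ rather than $A_k - B_k\Tilde{G}_i$ in $E_k$), which is the correct form; the versions printed in the theorem statement contain off-by-one slips, so your derivation both matches and quietly repairs the paper's.
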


\vspace{-3pt}
\section{Simulation}\label{sim}
\vspace{-3pt}
In this section, we conduct multiple simulations on our algorithm to show the performance and the unpredictability of the system control. 

In order to simplify the problem, we choose a single-in-single-out system. The dynamic equation is
\vspace{-3pt}
\begin{equation}\nonumber
\vspace{-3pt}
x_{k+1} = x_k + (x_k + u_k) \Delta t,
\end{equation}
where $\Delta t = T/N$ means to divide the total duration $T$ evenly into $N$ steps. Then we have $A = 1+\Delta t, B =1$ according to formula (1). Our control object is to maximize the unpredictability of the system state while meeting the performance requirement. Let $x_N^o=0$ and $H =1, Q=0, R = \Delta t$. The optimization function is described as
$\min_{\mu_k, \sigma_k} J_{1} = \lambda_1 \mathbb{E} (x_N^2) +  \lambda_2 \sum_{k = 0}^{N-1} \mathbb{E} (u_k^2\Delta t) + \sum_{k = 0}^{N-1} \frac{\lambda_{3}}{{\sigma_{k}}^2}$.

Set the initial state $x_0 = 20$ and $T = 10, N = 50$. Firstly, fix $\lambda_2 = 1, \lambda_3 = 0.5$ unchanged. With different $\lambda_1$, the state value $x_k$ and control input variance $\sigma^2_k$ at each step are shown in Fig.\ref{lam1}. We can see that $x_k$ has converged to the vicinity of the target state $x_N^o =0$ at almost $k=20$ and slightly fluctuates around $0$. The variance $\sigma_k$ decreases slowly until $N=40$ while drops rapidly to a small value in the last few steps. Note that $\lambda_1$ represents the weight of the deviation of the final state. When $\lambda_1$ is larger ($\lambda_1 = 15$), we find that $\sigma^2$ drops more and $\sigma_N^2$ is smaller, leading to a smaller deviation between $x_N$ and $0$. Consequently, if there is a high requirement for system performance, it will be necessary to increase $\lambda_1$.

\begin{figure}[t]
\vspace{-5pt}
\centering
\includegraphics[width=0.31\textwidth]{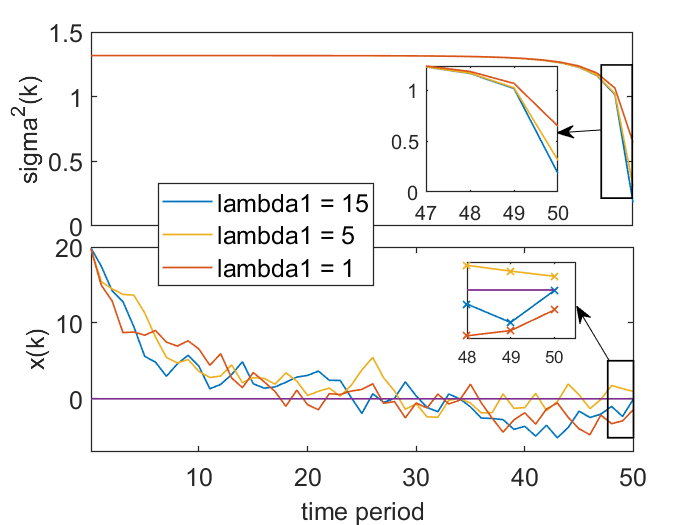}
\vspace{-5pt}
\caption{ Results with different $\lambda_1$.}
\label{lam1}
\vspace{-10pt}
\end{figure}

Now we fix $\lambda_1 = 5, \lambda_2 = 1$ and change $\lambda_3$. The curves of $x_k$ and $\sigma_N^2$ under different $\lambda_3$ are shown in Fig.\ref{lam3}. $\lambda_3$ represents the importance of the unpredictability and as $\lambda_3$ decreases from $1$ to $0.2$, the overall variance $\sigma^2$ gradually decreases too and the fluctuation range of $x_k$  becomes smaller.

\begin{figure}[t]
\centering
\includegraphics[width=0.31\textwidth]{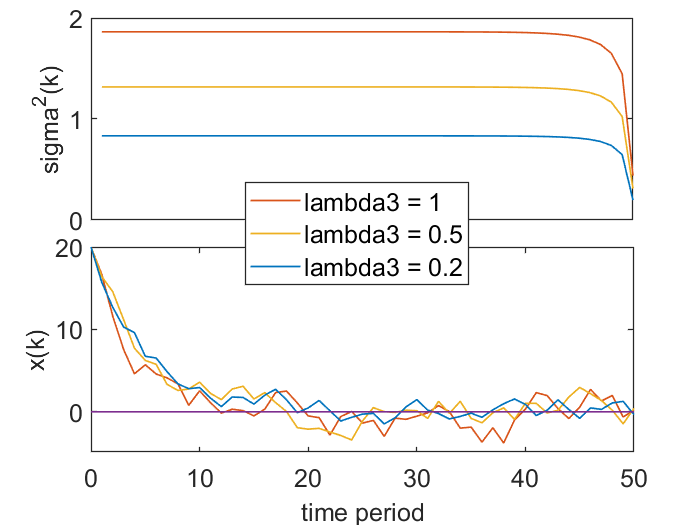}
\vspace{-5pt}
\caption{ Results with different $\lambda_3$.}
\label{lam3}
\vspace{-10pt}
\end{figure}

To demonstrate the unpredictability of our control law, we use another control method without random disturbance as a comparison (equivalent to $\lambda_3 = 0$), whose optimization object is given by $\min_{u_k} J_2 = x_N^2 + \sum_{k = 0}^{N-1} u_k^2 \Delta t$. Assume that the attacker has the optimal estimation of $u_k$, which means $\mathbb{E}(\hat{u}_{k|k}^*) = \mathbb{E}(u_k)$. We use Kalman Filter to do the one-step prediction. The observation noise is set to be $\mathcal{N}(0, 0.5)$ in the algorithm. At each step, the prediction of the next state is computed by $\hat{x}_{k+1|k} = A \hat{x}_{k|k} + B \hat{u}_{k|k}^*$. With two kinds of control methods, the prediction results and errors are illustrated in Fig.\ref{kf}. We can see that the prediction errors increase significantly after adding perturbation to the control inputs. The average and maximum prediction errors under different $\lambda_3$ are shown in Table.\ref{tab1} As $\lambda_3$ grows larger, the system state is more difficult to be predicted accurately.

\begin{figure}[t]
\centering
\includegraphics[width=0.31\textwidth]{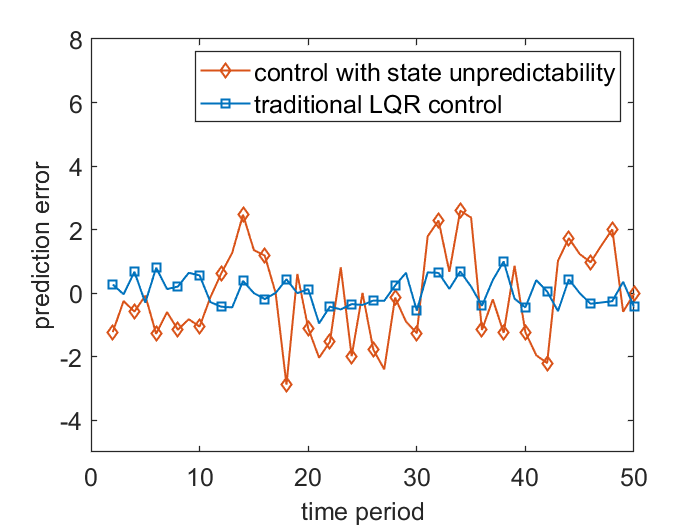}
\vspace{-5pt}
\caption{ Prediction errors with Kalman filter under different control policies. The blue line is traditional LQR control (($\lambda_3=0$), while the red line is our control policy with state unpredictability ($\lambda_3=0.5$).}
\label{kf}
\vspace{-10pt}
\end{figure}

\begin{table}\vspace{-20pt}
    \centering
    \caption{prediction error with Kalman filter ($\lambda_1 = \lambda_2 =1$)}
	\label{tab1}
    \begin{tabular}{ccccc}
    \toprule    
    $\lambda_3$ & 0 & 0.2 & 0.5 & 1  \\    
    \midrule   
    Ave. Error & 0.401 & 0.850 & 1.037 & 1.251 \\
    Max Error  & 1.389 & 2.440 & 2.795 & 3.124\\
    \bottomrule   
    \end{tabular}
\end{table}

The control results with and without simple input-constraints are shown in Fig.\ref{xu_cons}. Set the parameters $N=15$ and $\lambda_1 = 5, \lambda_2=1,\lambda_3=0.5$. The upper bound for control inputs $|u_k|$ is $\overline{u}=4$. We can find that our control policy with constraints has $|u_k|\leqslant4$ for all $k$, while the original policy without constraints exceeds the bound at $k=4$.
\begin{figure}[t]
\centering\includegraphics[width=0.31\textwidth]{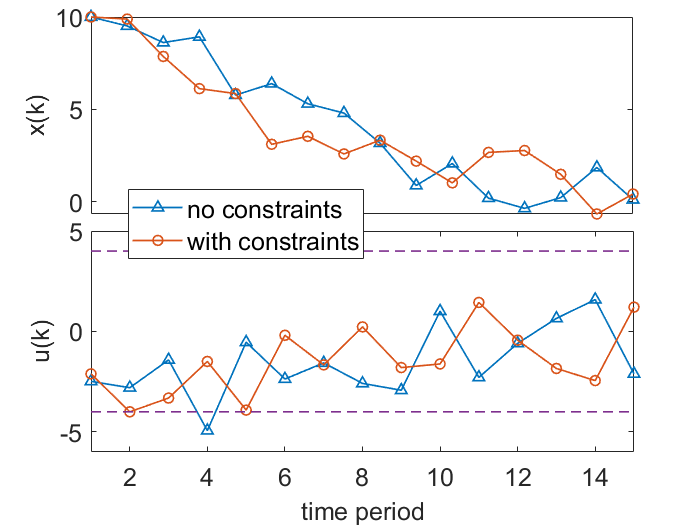}
\vspace{-5pt}
\caption{ Control with and without input-constraints.}
\label{xu_cons}
\vspace{-10pt}
\end{figure}

\vspace{-3pt}
\section{conclusion}\label{conc}\vspace{-3pt}
In this paper, an optimal control method considering state unpredictability for mobile agents is proposed. We add uniformly distributed random perturbations to the control inputs and formulate a multi-period convex optimization problem. The expectation and variance of the control inputs are solved through dynamic programming. The algorithm can also deal with the input-constrained systems. Our control method not only maximizes the attackers' prediction errors to the future states and guarantees the security of mobile agents, but also satisfies the system performance requirements. 

\vspace{-3pt}
\section*{Appendix}\vspace{-3pt}
Theorem \ref{the-1} is proved by mathematical induction.
According to the boundary conditions of the Bellman function, the value function at $N$ is written as:
\vspace{-3pt}
\begin{equation}\nonumber
\vspace{-3pt}
V(x_N) = \lambda_1 (x_N^T H x_N - 2q_N x_N),
\end{equation}
where $q_N = {x_N^o}^T H$. 
Let $J_{1,N} = \lambda_1 H$, $J_{2,N} = 2\lambda_1 q_N$
and $J_{3,N} = 0$. We have:
\vspace{-3pt}
\begin{equation}\label{vN}
\vspace{-3pt}
\begin{aligned}
V(x_N) =  x_N^T J_{1,N} x_N - J_{2,N} x_N + J_{3,N}.
\end{aligned}
\end{equation}
The Bellman function at time $N-1$ is:
\vspace{-3pt}
\begin{equation}\nonumber
\vspace{-3pt}
\small
\begin{aligned}
V(x_{N-1}) & = \min \{ \lambda_2 \mathbb{E} (x_{N-1}^T Q_{N-1} x_{N-1} \!+\! u_{N-1}^T R_{N-1} u_{N-1}) \\ & \quad + \sum_{i = 1}^{m} \frac{\lambda_{3,N-1}}{{\sigma_{N-1,i}}^2} + \mathbb{E}(V(x_{N})) \}.
\end{aligned}
\end{equation}
Substituting the dynamic model (1) into (\ref{vN}), we obtain:
\vspace{-3pt}
\begin{equation}\nonumber
\vspace{-3pt}
\begin{small}
\begin{aligned}
& \; \mathbb{E}(V(x_{N})) = \mathbb{E} (x_N^T J_{1,N} x_N) - J_{2,N} \mathbb{E}(x_N) + J_{3,N}.
\end{aligned}
\end{small}
\end{equation}
The function $V(x_{N-1})$ is simplified as:
\vspace{-3pt}
\begin{equation}\nonumber
\vspace{-3pt}
\begin{small}
\begin{aligned}
& V(x_{N-1}) = x_{N-1}^T W_{N-1} x_{N-1}\! -\! Z_{N-1} x_{N-1} \!+\! J_{3,N}+ \min\{\\
&  K_{N-1} \mu_{N-1} \!+\! \mu_{N-1}^T P_{N-1} \mu_{N-1} \!+\! \mathrm{Tr}(\Sigma_{N-1} P_{N-1}) \!+\!  \sum_{i = 1}^{m} \frac{\lambda_{3,N-1}}{{\sigma_{N-1,i}}^2} \},
\end{aligned}
\end{small}
\end{equation}
where
\vspace{-3pt}
\begin{equation}\nonumber
\vspace{-3pt}
\small
\left\{
\begin{array}{ll}
 W_k = \lambda_2 Q_k + A_k^T J_{1,k+1} A_k \\
 Z_k = J_{2,k+1} A_k \\
 P_k = \lambda_2 R_k + B_k^T J_{1,k+1} B_k \\
 K_k = 2 x_k^T A_k^T J_{1,k+1} B_k - J_{2,k+1}B_k,
\end{array}
\right.
\end{equation}
and $\Sigma_{N-1}$ is the covariance matrix of $\delta_{N-1}$. By differentiating $\mu_{N-1}$ and $\sigma_{N-1,i}$ separately, we have
\vspace{-3pt}
\begin{equation}\nonumber
\vspace{-3pt}
\begin{aligned}
&\frac{\partial {V}(x_{N-1})}{\partial \mu_{N-1}} = 2 \mu_{N-1}^T P_{N-1} + K_{N-1} = 0  \\
&\frac{\partial {V}(x_{N-1})}{\partial \sigma_{N-1,i}} = 2 P_{N-1,ii} \sigma_{N-1,i} - 2\frac{\lambda_{3,N-1}}{{\sigma_{N-1,i}}^3} = 0.
\end{aligned}
\end{equation}
Therefore the global optimal solution is:
\vspace{-3pt}
\begin{equation}
\vspace{-3pt}
\left\{
\begin{array}{ll}
 \mu_{N-1} = -G_{N-1} x_{N-1} + M_{N-1} \\
 {\sigma_{N-1,i}}^2 = (\frac{\lambda_{3,N-1}}{P_{N-1,ii}})^{\frac{1}{2}},
\end{array}
\right.
\end{equation}
where
\vspace{-3pt}
\begin{equation}\nonumber
\vspace{-3pt}
\left\{
\begin{array}{ll}
 G_k = P_k^{-1}B_k^T J_{1,k+1}^T A_k \\
 M_k = \frac{1}{2} P_k^{-1}B_k^T J_{2,k+1}^T,
\end{array}
\right.
\end{equation}
Substitute $\mu_{N-1}$ and $\sigma_{N-1}$ into $V(x_{N-1})$.
Let
\vspace{-3pt}
\begin{equation}\nonumber
\vspace{-3pt}
\small
\left\{
\begin{array}{ll}
 J_{1,k} = W_k - A_k^T J_{1, k+1} B_k G_k \\
 J_{2,k} = Z_k - J_{2,k+1} B_k G_k  \\
 J_{3,k} = - \frac{1}{4} J_{2,k+1} B_k M_k + 2 \sum_{i = 1}^{m} (\lambda_{3,k}P_{k,ii})^{\frac{1}{2}}+ J_{3, k+1},
\end{array}
\right.
\end{equation}
then we have
\vspace{-3pt}
\begin{equation}\nonumber
\vspace{-3pt}
V(x_{N-1}) = x_{N-1}^T J_{1,N-1} x_{N-1} \!-\! J_{2, N-1} x_{N-1} \!+\! J_{3, N-1}.
\end{equation}
Continue the above process for $k=N-2,\cdots,0$, then the mathematical induction is done.

\balance
\bibliographystyle{IEEEtran}
\bibliography{reference}

\end{document}